\pgfplotsset{compat=1.13}
\definecolor{codegreen}{rgb}{0,0.6,0}
\definecolor{codegray}{rgb}{0.5,0.5,0.5}
\definecolor{codepurple}{rgb}{0.58,0,0.82}
\definecolor{backcolour}{rgb}{0.95,0.95,0.92}
\definecolor{dkblue}{rgb}{0,0.1,0.5}
\definecolor{lightblue}{rgb}{0,0.5,0.5}
\definecolor{dkgreen}{rgb}{0,0.4,0}
\definecolor{dk2green}{rgb}{0.4,0,0}
\definecolor{dkviolet}{rgb}{0.6,0,0.8}
\definecolor{darkblue}{rgb}{0.0,0.0,0.6}
\definecolor{grey}{rgb}{0.5,0.5,0.5}
\newtheorem{theorem}{Theorem}%[section]
\newtheorem{definition}{Definition}%[theorem]
\newcommand{\from}{\leftarrow}
\newcommand{\rfrom}{\leftarrow_\$}
\newcommandx{\yaHelper}[2][1=\empty]{%
\ifthenelse{\equal{#1}{\empty}}%
  { \ensuremath{ \scriptstyle{ #2 } } } % no offset
  { \raisebox{#1}[0pt][0pt]{\ensuremath{ \scriptstyle{#2}}}}  % with offset
}   
\newcommandx{\yleftarrow}[4][1=\empty, 2=\empty, 4=\empty, usedefault=@]{%
  \ifthenelse{\equal{#2}{\empty}}
  { \xleftarrow{\protect{\yaHelper[#4]{#3}}}} % there's no text below
  { \xleftarrow[\protect{\yaHelper[#2]{#1}}]{ \protect{\yaHelper[#4]{#3}}}} % there's text below
}
\newcommand{\algoname}[1]{\textnormal{\textsc{#1}}}
\newcommand{\Share}{\algoname{Share}}
\newcommand{\Reconstruct}{\algoname{Reconstruct}}
\newcommand{\VSS}{\textup{VSS}}
\newcommand{\ASS}{\mathsf{ASS}}
\newcommand{\RepSS}{\mathsf{RepSS}}
\newcommand{\comp}{\circ}
\newcommand{\eqdef}{\stackrel{\textit{\tiny def}}{=}}
\newcommand{\Zp}{\mathbb{Z}_p}
\newcommand{\PP}{\mathbb{P}}
\newcommand{\Prob}[2][]{\PP_{#1}[ #2 ]}
\newcommand{\til}{{},\ldots ,{}}
\newcommand{\pWhile}{\text{pWhile}\xspace}
\newcommand{\Adv}{\mathcal{A}}
\newcommand{\advid}{a}
\newcommand{\Sim}{\mathcal{S}}
\newcommand{\inp}{\mathsf{in}}
\newcommand{\view}{\mathsf{view}}
\newcommand{\outputfct}{\mathsf{output}}
\newcommand{\MPC}{MPC\xspace}
\newcommand{\IDEAL}{\mathsf{IDEAL}}
\newcommand{\REAL}{\mathsf{REAL}}
\newcommand{\EC}{\text{EasyCrypt}\xspace}
\newcommand{\CCrypt}{\text{CertiCrypt}\xspace}
\begin{document}
\title{Computer-aided proofs for multiparty computation with active security}
\author{\IEEEauthorblockN{Helene Haagh\IEEEauthorrefmark{1} Aleksandr Karbyshev\IEEEauthorrefmark{1} Sabine Oechsner\IEEEauthorrefmark{1} Bas Spitters\IEEEauthorrefmark{1} Pierre-Yves Strub\IEEEauthorrefmark{2}}
\IEEEauthorblockA{\IEEEauthorrefmark{1}Aarhus University, DK}
\IEEEauthorblockA{\IEEEauthorrefmark{2}\'E{}cole Polytechnique, F}
}

\maketitle
\begin{abstract}
Secure multi-party computation (\MPC) is a general cryptographic
technique that allows distrusting parties to compute a function of
their individual inputs, while only revealing the output of the
function. It has found applications in areas such as auctioning, email filtering, and secure teleconference.

Given its importance, it is crucial that the protocols are
specified and implemented correctly. In the programming language community it has become good practice to use computer proof assistants to verify correctness proofs. In the field of cryptography, \EC is the state of the art proof assistant.
It provides an embedded language for probabilistic programming, together with a specialized logic, embedded into an ambient general purpose higher-order logic. It allows us to conveniently express cryptographic properties. \EC\ has been used successfully on many applications, including public-key encryption, signatures, garbled circuits and differential privacy.
Here we show for the first time that it can also be used to prove
security of MPC against a \emph{malicious} adversary. 

We formalize additive and replicated secret sharing schemes and apply them to Maurer's \MPC protocol for secure addition and multiplication.
Our method extends to general polynomial functions. 
We follow the insights from \EC that security proofs can be often be reduced to proofs about program equivalence,
a topic that is well understood in the verification of programming languages.
In particular, we show that in the passive case the non-interference-based (NI) definition is equivalent to a standard game-based security definition.
For the active case we provide a new NI definition, which we call \emph{input independence}.
\end{abstract}

\section{Introduction}

The study of multiparty computation started in the 1980s with the work of Yao \cite{Yao86} and Goldreich et al.~\cite{GMW87}. It has since grown increasingly important and is starting to be used in real-life applications such as auctioning~\cite{mpcbook}, email filtering, secure teleconference~\cite{launchbury2014application}.

A widely used technique for constructing \MPC protocols is secret sharing~\cite{S79,B79}, a cryptographic primitive that distributes a secret among several parties by providing each party with a share of the secret. 
The secret can be reconstructed by combining the shares belonging to a qualified subset of the parties (i.e. parties that are allowed to learn the secret), while other subsets of the parties will have no information on the secret (even when combining their shares).

\EC~\cite{Barthe2011,DBLP:conf/fosad/BartheDGKSS13} has been used to verify cryptographic primitives, and more recently to verify protocols by using it's built-in probabilistic While-language.
While cryptography papers usually provide a presentation of the algorithms
in pseudocode, \EC code of such a protocol is usually not
much longer, but has the benefit of being completely precise.
Moreover, a framework is being developed to tie \EC
into a fully verified tool chain to generate verified low level code
from the protocol definition~\cite{almeida2014verified} and thus obtain high assurance cryptography.

A clear motivation for formal verification is given by Bellare and Rogaway~\cite{cryptoeprint:2004:331}: `In
our opinion, many proofs in cryptography have become essentially
unverifiable. Our field may be approaching a crisis of rigor.' An
 example of the usefulness of formal verification is provided by the vulnerabilities in the Dual EC random bit generator, where the correctness proof was flawed~\cite{hales2013nsa}. Any attempt to formalize this argument would have spotted the gap in the proof. In particular, an attacker that chooses the constants used in Dual EC could potentially predict outputs and this way introduce a backdoor into protocols using Dual EC such as TLS~\cite{checkoway2014practical}. The \emph{feasibility} of verification is, for
instance, demonstrated by the subsequent
verification of improved protocols for elliptic curve cryptography~\cite{ZBB16}.
Finally, formal verification is required to obtain the highest assurance level (EAL7) in Common Criteria.

\subsection{Our Contribution} 
\begin{itemize}
    \item We provide security definitions and proofs for the \MPC protocol by Maurer~\cite{Maurer2006370}. This is the first formalized proof for more than two parties and the first formalized proof of a protocol that is \emph{actively} secure. 
    \item We have formalized these proofs in \EC, a tool that has been used for cryptographic primitives, but only recently also for protocols. A precise description of what we have formalized is presented in section~\ref{whatwehavedone}.
\end{itemize}

We split the protocol into three phases: input, computation and output, where the computation phase can potentially consist of an arbitrary combination of additions and multiplications. This standard approach allows us to treat arithmetic circuits. 
Mathematically, one can see arithmetic circuits as a way to represent multi-variate polynomials on the ring%
\footnote{In the cryptographic literature, most MPC protocols rely on computation over a field. In the case of our protocol, however, a ring is sufficient.} 
$\mathbb{Z}_m$, and therefore, to represent any function over $\mathbb{Z}_m$.

We first discuss simulation-based security definitions for
passive and active security of \MPC protocols. The use of simulation-based notions is the standard approach in cryptography for defining security of MPC protocols  We then proceed with new non-interference-based definitions that are tailored specifically to the class of protocols we consider, and relate them to the simulation-based ones. In particular, we prove that they imply the standard simulation-based cryptographic definitions. In the passive case, non-interference is equivalent to the existence of a simulator. Non-interference (NI) is especially suitable for the computer-aided proofs
presented in \EC, where the probabilistic relational Hoare logic presents a solid foundation
for proving non-interference based statements. A simulation-based proof would proceed by considering an equivalence between a program and a simulator, which are structurally different, whereas NI considers two runs of the \emph{same} program. In the active case, this difference is even bigger because in the security definition the simulator does not obtain the protocol output in advance.
The feasibility of using non-interference for cryptography is known and was e.g. studied by Backes and Pfitzmann \cite{BP04}. In the case of \EC, the benefits were emphasized, in a different context, in a work on masking schemes by Barthe et al.~\cite{DBLP:conf/ccs/BartheBDFGSZ16}. NI is a compositional property, which allows us to build modular proofs.

An important motivation for our work is provided by the \EC formalization of Boolean garbled circuits~\cite{almeida2014verified}.
It provides high assurance crypto, a completely verified tool chain starting
from a readable \EC protocol to verified low level code. We hope to profit from the same technology in the future. 
Garbled circuits provide a framework for secure 2-party computation, a technique complementary to the techniques we use here. 

\subsection{Outline} 
\noindent
Section~\ref{MPC} contains background on secure multi-party computation. Section~\ref{NI} contrasts non-interference based definitions with simulation based ones. Section~\ref{Maurer} discusses our modelling of Maurer's description of active security for the addition protocol. Section~\ref{related} discusses related work and Section~\ref{conclusion} concludes.

The sources are available at
\url{http://users-cs.au.dk/spitters/MPCEC/}.
The implementation compiles with the development version of \EC available from \url{https://github.com/EasyCrypt/easycrypt}.

%%% Local Variables:
%%% mode: latex
%%% TeX-master: "main"
%%% End:

\section{Preliminaries}
\noindent
Let $[n]$ denote the set $\{1 \til n\}$. We will use $\mathbf{x}$ to denote a vector $(x_1,\ldots,x_n)$ and $\mathbf{x}_i$ for its $i$th projection. Conveniently, almost all vectors in this paper have the same length.
Let $x \from \mathcal{D}$ denote the sampling of an element $x$ according to some distribution $\mathcal{D}$, and let $x \rfrom S$ denote that $x$ is sampled from the uniform distribution over the finite set $S$.
We fix an integer $m$ and consider the ring $\mathbb{Z}_m$.

We now want to compare the output distributions of two executions of probabilistic algorithms.

\begin{definition}[Perfect indistinguishability]
 Let $U(x)$ and $V(x)$ be the output distribution of probabilistic algorithm $U$ and $V$, respectively, on input x. Then $U$ and $V$ are perfectly indistinguishable (denoted $U \sim^p V$) if for all inputs $x$, $U(x) = V(x)$.
\end{definition}

\subsection{Secret Sharing} \label{sec:prelim:share}
\noindent
Here we recall the definitions of additive, replicated, and verifiable secret sharing schemes from~\cite{Maurer2006370}. 
The definitions build on top of each other: The verifiable secret sharing scheme combines replicated secret sharing with additional communication, and replicated secret sharing uses additive secret sharing internally.
Looking ahead, we want to state the definition of our replicated and verifiable secret sharing scheme for the special secrecy structure%
\footnote{Secrecy structure: the collection of ignorant party subsets, i.e., it contains all subsets of the parties that cannot learn anything about the secret.}
with privacy against a single party%
\footnote{However, this can easily be extended to corruption of more parties by adapting the secret sharing schemes to the corresponding secrecy structure. See \cite{Maurer2006370} for details.}.
I.e. let $\mathcal{P} = \{P_1 \til P_n\}$ be the set of parties, then for each $i \in [n]$ $\{P_i\}$ is in the secrecy structure $\Sigma$.
The replicated secret sharing scheme with this access structure requires the underlying additive secret sharing scheme to provide privacy against any set of $n-1$ colluding parties.
Throughout this work, we will refer to the party that creates a secret sharings as the \emph{dealer} (of the corresponding secret).

\paragraph{Additive Secret Sharing.} 
The additive secret sharing scheme for $n$ parties consist of a pair of algorithms: $$\ASS = (\Share, \Reconstruct),$$ which are defined as follows:
\begin{itemize}
\item[]\textbf{Sharing:} The $\ASS.\Share$ algorithm on input a secret $s \in \Zp$, samples $n-1$ values $a_1 \til a_{n-1}$ and computes $a_n = s - \sum_{i=1}^{n-1} a_i$. Then it outputs the shares $(a_1 \til a_n)$.
\item[]\textbf{Reconstruct:} The $\ASS.\Reconstruct$ algorithm on input a sharing $(a_1 \til a_n)$, computes and outputs $s = \sum_{i=1}^{n} a_i$.
\end{itemize}

\paragraph{Replicated Secret Sharing.} 
The replicated secret sharing scheme for $n$ parties consist of a pair of algorithms: $$\RepSS = (\Share, \Reconstruct),$$ which are defined as follows:
\begin{itemize}
\item[]\textbf{Sharing:} The $\RepSS.\Share$ algorithm on input a secret $s \in \Zp$, runs the additive sharing algorithm $ (a_1 \til a_n) \from \ASS.\Share(s) $. Then it constructs for all $i \in [n]$
$$ r_i = (a_1 \til a_{i-1}, \bot, a_{i+1} \til a_n) $$
Finally it outputs the shares $(r_1 \til r_n)$.
\item[]\textbf{Reconstruct:} The $\RepSS.\Reconstruct$ algorithm takes a sharing $(r_1 \til r_n)$, extract the additive sharing $(a_1 \til a_n)$ and outputs 
$$s = \ASS.\Reconstruct(a_1 \til a_n)$$
\end{itemize}

\paragraph{Verifiable Secret Sharing.} The verifiable secret sharing scheme for $n$ parties consist of two protocols $\VSS = (\Share, \Reconstruct)$, which are defined as follows:
\begin{itemize}
\item[]\textbf{Sharing:} The $\VSS.\Share$ protocol proceeds as follows
    \begin{enumerate}
        \item The dealer shares their secret $s \in \Zp$ using the replicated sharing algorithm $$ (r_1 \til r_n) \from \RepSS.\Share(s) $$
        i.e., party $P_i$ receives the share\\ $r_i = (a_1 \til a_{i-1}, \bot, a_{i+1} \til a_n)$.
        \item For each $i \in [n]$, each pair of parties in $\mathcal{P}\setminus \{P_i\}$ check whether they received the same value for $a_i$.
        If any inconsistency is detected, the players broadcast a complaint.
        \item The dealer broadcasts all the shares for which a majority of parties raised a complaint. The other parties accept these broadcasted values. If the dealer refuses to broadcast any of the requested shares, the protocol aborts.
    \end{enumerate}

\item[]\textbf{Reconstruct:} The $\VSS.\Reconstruct$ proceeds as follows
    \begin{enumerate}
        \item Party $P_i$ knows share $r_i$ for all $i \in [n]$.
        \item All parties send their share to all other parties such
          that each party knows $r_1 \til r_n$.
        \item Now each party obtained $n-1$ copies of $a_i$ (the underlying additive share). Each party locally does a majority vote and takes the value that occurs more than half of the time.
        \item After the majority vote each party knows $a_1 \til a_n$ and can reconstruct the secret using additive secret sharing 
        $$s = \ASS.\Reconstruct(a_1 \til a_n)$$
    \end{enumerate}
\end{itemize}

\subsubsection{Properties}
All three secret sharing schemes in this section have the following properties:

\paragraph{Correctness} 
Let $s \in \Zp$ be a secret. Then $\Reconstruct(\Share(s)) = s$.

\paragraph{Secrecy}
Let $s_1$ and $s_2$ be secret sharings of two secrets:
\begin{align*}
    (a_1^1, \dots, a_n^1) &\from \Share(s_1), \quad
    (a_1^2, \dots, a_n^2) \from \Share(s_2),
\end{align*}
where party $P_i$ knows shares $a_i^1$ and $a_i^2$. Then for all $i$, $a_i^1 \sim^p a_i^2$.

Both additive and replicated secret sharing schemes can be seen as passively secure secret sharing schemes as they only provide correctness guarantees against an honest dealer. Verifiable secret sharings on the other hand ensures that the dealer's sharing is consistent among parties and hence guarantees correctness against a malicious dealer.

\subsubsection{Linearity of the Secret Sharing Schemes}

The three secret sharing schemes in this section are all linear secret sharing schemes. 
Let $s_1$ and $s_2$ be secret sharings of two secrets:
\begin{align*}
    (a_1^1, \dots, a_n^1) &\from \Share(s_1), \quad
    (a_1^2, \dots, a_n^2) \from \Share(s_2),
\end{align*}
where party $P_i$ knows shares $a_i^1$ and $a_i^2$.
The parties can compute the secret sharing of the sum $s_1 + s_2$ by performing a linear operation on the shares that they know. 
E.g. for additive secret sharing this operation is the addition of the known shares:
\begin{align*}
    rs_i = a_i^1 + a_i^2 \qquad \text{for } i\in[n].
\end{align*}
Then $s_1 + s_2 = \Reconstruct(rs_1, \dots, rs_n).$

\section{Multiparty Computation}\label{MPC}
\noindent
In multi-party computation, $n$ parties wish to compute a deterministic function $f$ of their secret inputs.
For the rest of this work, we fix a (deterministic) function $f : X_1 \times \cdots \times X_n \to Y$, and hence assume for simplicity that all parties obtain the same output\footnote{Different outputs for the parties can be obtained by opening outputs only to certain parties.}. The goal is to compute $y = f(\textbf{x})$, while ensuring that the following security requirements are fulfilled:
\begin{itemize}
    \item Correctness: The correct value of the output $y$ is computed.
    \item Privacy: The output $y$ is the only new information that can be derived from the computation.
\end{itemize}
To achieve this goal, the parties use a probabilistic protocol $\pi$ to compute $y$.

\paragraph{The adversarial model.}
When talking about security of a multi-party computation protocol, we need to consider the power of the adversary.
In our setting, we consider static corruption, where the corrupted parties are determined before the protocol execution.
Furthermore, we consider two types of adversarial behavior: \emph{passive} and \emph{active} corruption, which specify the actions the corrupted parties are allowed to take. 
\begin{itemize}
    \item In the passive model, the adversary is assumed to follow the protocol description, but attempts to learn more than the output from the protocol execution, i.e., tries to learn information that should remain private. This is also known as security against semi-honest or honest-but-curious adversaries.
    \item In the active model, the adversary is allowed to deviate from the protocol description arbitrarily. This is also known as security against malicious adversaries.
\end{itemize}
Moreover, we assume that the adversary is \emph{stateful}, i.e. the adversary can maintain an internal state to log the transcript of the protocol etc.

\subsection{The Protocol}
\noindent
We consider the \MPC protocol presented by Maurer~\cite{Maurer2006370} which provides active security under corruption of $t < n/3$ parties. The protocol is based on verifiable secret sharing and computes a public function of the $n$ parties' inputs that is represented as arithmetic circuit.
The protocol consists of three phases: An input phase, where each of the $n$ parties' secret inputs is shared using verifiable secret sharing; the computation phase, where the parties perform computations on the shares; and finally, the output phase, where each party opens their share of the result, thus allowing everyone to locally reconstruct the result. Evaluating an arithmetic circuit requires additions and multiplications. Like in many other contexts, multiplication is significantly harder to achieve than addition. In our case, the linearity of the secret sharing scheme provides a way of locally adding secret sharings, while multiplication requires further communication.
For simplicity, we focus on the case of corruption of a single party.

The protocol assumes synchronous communication and the existence of private authenticated communication channels between parties as well as an authenticated broadcast channel. Even though the broadcast channel can be simulated, we will use this abstraction to ease readability. Furthermore, the adversary is allowed to be rushing, i.e.\ in each communication step, the adversary can see all the communication from all honest parties before sending its own.

We will now describe the phases of the protocol:

\begin{itemize}
\item[]\textbf{Input:} Each party performs a verifiable secret sharing of their secret input (as described in Section~\ref{sec:prelim:share}). These shares are distributed s.t.\ party $P_j$ receives the $j$'th shares of each secret. At the end of the phase, each party is committed to an input, i.e.~ this input is uniquely determined from the other parties' combined shares. The following matrix represents the knowledge of one party. Each row $i$ stands for the share received from party $P_i$.
\begin{align*}
    \begin{bmatrix}
        a_1^1   &   a_2^1   &   \cdots  &   \bot    &   \cdots  &   a_n^1   \\
        a_1^2   &   a_2^2   &   \cdots  &   \bot    &   \cdots  &   a_n^2   \\
        \vdots  &   \vdots  &   \ddots  &   \vdots  &   \ddots  &   \vdots  \\
        a_1^n   &   a_2^n   &   \cdots  &   \bot    &   \cdots  &   a_n^n   
    \end{bmatrix}     
\end{align*}
\item[]\textbf{Computation:} The parties can now compute on shared values: 
\begin{itemize}
    \item[]\textbf{Addition:} Since the secret sharing scheme is linear, the parties can compute a secret sharing of the sum of any number of shared values by locally adding their shares.  E.g.~given $n$ shares of values represented as matrix of shares, party $P_i$ will collapse each column by adding the values:
\begin{align*}
    ss_j = a_j^1 + a_j^2 + \cdots + a_j^n \qquad \text{for } j \in [n] \text{ s.t. } j \neq i
\end{align*}
This provides party $P_i$ with the $i$'th share of the sum of the secret inputs
\begin{align*}
    rs_i = (ss_1, \dots, ss_{i-1}, \bot, ss_{i+1}, \dots, ss_n).
\end{align*}
    \item[]\textbf{Multiplication:}
    In order to multiply two secret shared values, the parties need to communicate and to introduce fresh randomness. Each party $P_i$ knows the $i$-th share of two shared values $a$ and $b$: 
    \begin{align*}
    \begin{bmatrix}
        a_1   &   a_2   &   \cdots  &   \bot    &   \cdots  &   a_n   \\
        b_1   &   b_2   &   \cdots  &   \bot    &   \cdots  &   b_n   \\
    \end{bmatrix} 
    \end{align*}
    Note now that $a\cdot b = \sum_{i,j} a_i b_j$, meaning that if each party knows a secret sharing of each term $a_i b_j$, then they can compute $a\cdot b$ by (locally) adding the sharings of all terms. 
    
    The parties proceed as follows to compute a secret sharing of term $a_i b_j$: First, each of the $n-2$ (or $n-1$ if $i=j$) parties that knows both $a_i$ and $b_j$ will compute a fresh secret sharing of $a_i b_j$. In a next step, the parties check if all sharings are sharings of the same value by first computing the pairwise differences between two sharings and then opening and reconstructing the difference. If all differences are 0, then the parties choose the sharing of an arbitrary party as sharing of $a_i b_j$. If any of the opened differences is different from 0, then the parties will compute a secret sharing of $a_i b_j$ in the following way: Each party that knows $a_i$ reports their value of $a_i$ to all other parties. Each party sets $a_i$ to be the majority over all received values for $a_i$. The parties do the same for $b_j$. Then, each party sets the additive sharing of the term $a_i b_j$ to be
    $(a_i b_j, 0, \dots, 0)$, computes the corresponding replicated secret sharing, and stores its share of it.
    
\end{itemize}
\item[]\textbf{Output:} Each party opens their share $rs_i$ of the result by sending it to all other parties. This allows everyone to locally reconstruct the sum $y$ of the secret inputs:
\begin{align*}
    y = \VSS.\Reconstruct(rs_1 \til rs_n).
\end{align*}
\end{itemize}

For simplicity, we will consider two kinds of protocol, with a single addition and multiplication, resp., as computation phase. They will be referred to as addition and multiplication protocol, respectively.

\subsection{Passive Security}

This section presents a definition of passive security against one corrupt party following the simulation paradigm, the current standard approach for defining security of \MPC protocols~\cite{GMW87}.

In the setting of passive security, the adversary must follow the protocol description (i.e. the adversary must use his designated input and must send the correct messages). This might seem like a very weak security model, since it does not capture even small deviations from the protocol description. However, it guarantees that the protocol does not leak any information inadvertently (i.e. that an honest-but-curious adversary does not learn unwanted information from the transcript of the protocol). Another way of interpreting passive security is that it provides security against corruption after the execution of the protocol, i.e. what honest parties learned and stored during the protocol execution does not leak information.

The goal is to ensure that the adversary only learns the output of the
computation, which means that everything it sees during the execution
can be computed based on its input and the output. This property is
proved by building a simulator. This simulator is given the input and output of the corrupt party and computes a view (or transcript) that is indistinguishable from the adversary's view in the real execution of the protocol.

\begin{definition}[View and output]
Let $\pi$ be a protocol for computing $f$. 
We define the view of a party as all the messages sent and received by this party during the protocol execution, i.e., we can denote the view of party $P_i$ by
\begin{equation*}
    \view_i^\pi(\mathbf{x}) \eqdef (\mathbf{x}_i,m_i^{(1)} \til m_i^{(t)}),
\end{equation*}
where $m_i^{(j)}$ denote the $j$th message sent or received by party $P_i$. 
Furthermore, we denote the common output of all parties by 
\begin{equation*}
    \outputfct^\pi(\mathbf{x}).
\end{equation*}
\end{definition}

Note that we are considering the setting of deterministic functions (like addition and multiplication) and corruption of a single party.
This means that for passive security, we can consider the correctness and privacy separately~\cite[Chapter 2]{HL10-2PC}. 

\begin{definition}[Perfect Passive Simulation-based Security, \cite{HL10-2PC}] \label{def:sim_passive_security}
We say that an $n$-party protocol $\pi$ securely computes $f$ in the presence of static semi-honest adversaries, if 
\begin{itemize}
    \item[]\textbf{Correctness:} For every $\mathbf{x} \in X_1 \times \dots \times X_n$,
    \begin{equation*}
        \Prob {\outputfct^\pi(\mathbf{x}) = f(\mathbf{x})} = 1.
    \end{equation*}
    \item[]\textbf{Privacy:} For all $i \in [n]$, and for all adversaries that passive corrupts party $P_i$, there exists a polynomial-time simulator $\Sim_i$ such that 
    \begin{equation*}
        \left\lbrace \Sim_i \left( \mathbf{x}_i, f(\mathbf{x}) \right) \right\rbrace_{\mathbf{x}} 
            \sim^p
        \left\lbrace \view_i^\pi(\mathbf{x}) \right\rbrace_{\mathbf{x}}
    \end{equation*}
    where $\mathbf{x} \in X_1 \times \dots \times X_n$.   
\end{itemize}
\end{definition}

\subsection{Active Security}\label{MPC:active_sec}

In this section, we discuss the existing cryptographic definition of active security for \MPC based on the ideal-real world model.
We present a natural extension of the two-party definition by Hazay and Lindell~\cite{HL10-2PC} to the case of $n$ parties.%
\footnote{Note that this extension is restricted to our setting of perfect security in the case of one corrupted party. However, this can be extended to corruption of several parties.}

Active security models the setting where a malicious (or actively corrupt) party may follow any strategy (including arbitrarily deviating from the protocol description). Thus, it is insufficient to consider the adversary's view in the protocol based on its input and output (like in the passive case). This is especially important given that the adversary may try to change its input during the protocol execution, make the output be incorrectly distributed, or make the honest parties output different or incorrect values (to mention a few possible strategies). 

To capture these threats, we consider the ideal-real world model, where we compare the real execution of the protocol with an ideal execution that is secure by definition. 
In the ideal execution, the computation is performed by a trusted party $T$. This party is incorruptible and acts as a black-box such that no one can observe or influence the computation performed by this trusted party. 
This means that the only ``attack'' we allow the adversary to perform is input substitution, i.e., the only thing the adversary is able to do is to change its ``given'' input to something else while this new input cannot depend on the inputs of the honest parties (since the adversary in the ideal execution receives no information before having to commit to its input).
Intuitively, security is shown by providing a simulator (with access to the real-world adversary) that when interacting with $T$ in the ideal world will produce the same output distribution as in the real world.
Note that this definition captures both correctness and privacy of a protocol: Privacy follows from the simulation paradigm and correctness from the fact that the ideal model always outputs the correct result.

We remark that we consider the setting of one corrupt party to match the setting used in the rest of the paper. However, the definition can easily be extended to a setting of several corrupted parties. In the information-theoretic setting that we consider in this work, protocols can achieve active security under corruption of at most $t < n/2$ parties assuming broadcast (depending on the protocol) \cite{RB89}. However, Maurer's protocol allows only for $t < n/3$, even with broadcast~\cite{Maurer2006370}.

Like in the setting of passive security, our definitions work for both polynomially bounded and unbounded simulators.

\paragraph{Notation.}
Let $P_1 \til P_n$ be $n$ parties, and let $\Adv$ denote the adversary that decides to corrupt party $P_\advid$ with $\advid \in [n]$.
Let $T$ be a trusted party that on inputs $\mathbf{x}$ correctly computes $f(\mathbf{x})$.

\paragraph{Ideal Model.} In this model each party sends their input to a trusted party $T$ that computes the function $f$ and returns the result to the parties. 
\begin{itemize}
\item[]\textbf{Input:} Let $x_i$ be the input of party $P_i$ for $i \in [n]$.

\item[]\textbf{Send inputs:} The honest parties $P_i$ for $i \in [n] \setminus \{\advid\}$ send their inputs $x_i$ to the trusted party $T$. The corrupt party $P_\advid$ sends its prescribed input $x_\advid$, some other input, or a special abort symbol $\bot$ to the trusted party $T$. This decision is made by the adversary $\Adv$ and may depend on $x_\advid$ and the adversary's internal state.\\
Let $\mathbf{x}'$ denote the inputs that $T$ receives, where $\mathbf{x}'_i = \mathbf{x}_i$ for all $i \neq \advid$.

\item[]\textbf{Receive outputs:} If $\mathbf{x}'_i = \bot$ (the abort symbol) for some $i\in[n]$, then $T$ informs all parties that the protocol aborts by sending $\bot$ to all parties.  
Otherwise, the trusted party $T$ computes $y = f (\mathbf{x}')$, and sends $y$ to all parties.

\item[]\textbf{Output:} The honest parties output $y$, while the adversary $\Adv$ outputs an arbitrary function $g$ of the prescribed input $x_\advid$ of the corrupts party and the value $y$ obtained from the trusted party (i.e $g(x_\advid,y)$).
\end{itemize}

Let $\IDEAL_{f,\Adv,\advid}(\mathbf{x})$ denote the output of the ideal execution. This is an $n$-tuple containing the honest parties' outputs $y$ and the output of the adversary $\Adv$ from the above ideal execution
\begin{equation*}
    \IDEAL_{f,\Adv,\advid}(\mathbf{x}) := \left(y_1 \til y_n \right)
\end{equation*}
where $y_\advid = g(x_\advid,y)$ and $y_i = y$ for $i \in [n] \setminus \{\advid\}$

\paragraph{Real Model.}
A real execution of the protocol $\pi$ (with no trusted party).
In this case, the adversary $\Adv$ sends all messages on behalf of the corrupt party $P_\advid$ and may follow an arbitrary strategy. The honest parties must follow the protocol description. 

Let $\REAL_{\pi,\Adv,\advid}(\mathbf{x})$ denote the output tuple containing the honest parties' outputs and the output of the adversary $\Adv$ from the real execution of $\pi$.

\begin{definition}[Perfect Active Simulation-based Security, \cite{HL10-2PC}]
\label{def:sim_active_security}
Let $\pi$ be a $n$-party protocol that computes $f$. Protocol $\pi$ is said to securely compute $f$ in the presence of static malicious adversaries, if for every adversary $\Adv$ for the real model, there exists an polynomial-time adversary $\Sim$ (called a simulator) for the ideal model, such that for every $\advid \in [n]$
\begin{equation*}
    \left\lbrace \IDEAL_{f,\Sim,\advid}(\mathbf{x}) \right\rbrace_{\mathbf{x}}
    \sim^p
    \left\lbrace \REAL_{\pi,\Adv,\advid}(\mathbf{x}) \right\rbrace_{\mathbf{x}}
\end{equation*}
where $\mathbf{x}_i \in X_i$ for $i \in [n]$.
\end{definition}

\paragraph{Input extraction}

Any protocol satisfying Definition~\ref{def:sim_active_security} must allow the simulator to extract the input of corrupt parties from the messages they send (up to equivalence, i.e.~ inputs that lead to the same output). 
The reason is that in the real world, the simulator can only get the output that an adversary would see from the ideal functionality by using the adversary's input. Hence, in a protocol that outputs the correct result, it must be possible to extract the adversary's input.

%%% Local Variables:
%%% mode: latex
%%% TeX-master: "main"
%%% End:

\section{Modelling privacy as input-independence}\label{NI}
\noindent
In this section, we present new security definitions against both passive and active corruption where privacy is based on input-independence, and prove that these new definitions imply the simulation-based definitions from Section~\ref{MPC}.
These definitions will allow us to prove active security of Maurer's protocol  in EasyCrypt in Section~\ref{Maurer}.

In the following sections, we redefine security using a
non-interference-based strategy instead of a simulation-based definition, i.e., we compare the adversary's view in two different executions of the protocol under conditions that rule out trivial distinguishability. Input-independence then means that the adversary, given only its input and the output of computation, cannot distinguish which of all possible consistent inputs of the honest parties was used.

\subsection{Input-independence}
\noindent
Input independence is the non-interference property of protocols where the view of a party in the protocol is independent of the other parties' inputs. 
In our setting, this property will only hold only under the side condition that the inputs of the other parties must be consistent (i.e.~the two executions of the protocol must lead to the same output).

\begin{definition}[Non-interference]
An $n$-party protocol $\pi$ enjoys non-interference if for all $i\in[n]$, we have that for all inputs $\mathbf{x},\mathbf{x}' \in X_1 \times \cdots \times X_n$ related under some condition $C_i(\mathbf{x},\mathbf{x}')$, then it holds that 
\begin{equation*}
    \left\lbrace  \view_i^\pi(\mathbf{x}) \right\rbrace_{\mathbf{x}} 
        \sim^p
    \left\lbrace  \view_i^\pi(\mathbf{x}') \right\rbrace_{\mathbf{x}'}.
\end{equation*} 
\end{definition}

The definition states that given two sets of inputs that are related under some condition, then party $P_i$ cannot distinguish between the two executions of the protocol (i.e. party $P_i$'s view in the two executions are indistinguishable).

Looking ahead, we will model the view of the adversary in \EC using the global state of the adversary (\lstinline|glob A|). Thus, given an adversary that corrupts party $P_i$, we can express the non-interference property in \EC pseudocode as follows
\begin{center}
\lstinline|equiv [ $\pi$ ~ $\pi$ : ={glob A} /\ $C_i$ ==> ={glob A}]|
\end{center}
The condition \lstinline|={glob A}| means that the global state of the adversary before the two executions of $\pi$ are equal, while after the executions we have equality over the distribution of the adversary's global state.

\subsection{Passive security} 

Passive security will be defined again as two properties of a protocol $\pi$: correctness and privacy. However, we define the privacy property now as an input-independence property.

Recall that $\view_i^\pi(\mathbf{x})$ denotes the view of party $P_i$ in the execution of the protocol on inputs $\mathbf{x} = (x_1 \til x_n)$, and that $\outputfct^\pi(\mathbf{x})$ denotes the common output of all parties.

For the privacy property in this definition, we will fix the input of the corrupt party to be the same in both executions. The inputs of the honest parties are chosen consistently such that the output of the computation is the same in both executions. Then input-independence means that the view of a corrupt party is independent of the actual inputs of the honest parties as long as they lead to the same output.

\begin{definition}[Perfect Passive NI-based Security] \label{def:NI_passive_security}
We say that an $n$-party protocol $\pi$ securely computes $f$ in the presence of static semi-honest adversaries if
\begin{itemize}
    \item[]\textbf{Correctness:} For every $\mathbf{x} \in X_1 \times \dots \times X_n$,
    \begin{equation*}
        \Prob {\outputfct^\pi(\mathbf{x}) = f(\mathbf{x})} = 1.
    \end{equation*}
    
    \item[]\textbf{Privacy:} For all $i\in[n]$, and for all adversaries that passively corrupt party $P_i$, we have that for all inputs $\mathbf{x}$, $\mathbf{x}' \in X_1 \times \cdots \times X_n$ such that $\mathbf{x}_i = \mathbf{x}'_i$ (fixed input for the corrupt party) and $f(\mathbf{x}) = f(\mathbf{x}')$, then it holds that
    \begin{equation*}
        \left\lbrace  \view_i^\pi(\mathbf{x}) \right\rbrace_{\mathbf{x}} 
            \sim^p
        \left\lbrace  \view_i^\pi(\mathbf{x}') \right\rbrace_{\mathbf{x}'}
    \end{equation*} 
\end{itemize}
\end{definition}

\subsubsection*{Equivalence between the definitions.}
We will now prove that the presented non-interference-based definition is equivalent to the simulation-based definition for perfect \emph{passive} security.

\begin{theorem}\label{passive_eq_ni_sim}
Let $f$ be an efficiently invertible function and let $\pi$ be an $n$-party protocol that computes $f$. Then $\pi$ is perfect passive simulation-based secure if and only if $\pi$ is perfect passive NI-based secure.
\end{theorem}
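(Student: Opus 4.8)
The plan is to start from the observation that the two definitions have \emph{identical} correctness clauses, so the statement reduces to proving that, for every fixed $i\in[n]$, the privacy clause of Definition~\ref{def:sim_passive_security} and the privacy clause of Definition~\ref{def:NI_passive_security} are equivalent. Since in the passive model the adversary merely follows the protocol and observes, $\view_i^\pi(\mathbf{x})$ is a distribution determined by $\pi$ and the inputs alone, so the ``for all adversaries corrupting $P_i$'' quantifier is harmless and can be fixed throughout. I would then prove the two implications separately.

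For the direction \emph{simulation-based $\Rightarrow$ NI-based}, I would simply take the simulator $\Sim_i$ promised by Definition~\ref{def:sim_passive_security} and, given $\mathbf{x},\mathbf{x}'$ with $\mathbf{x}_i=\mathbf{x}'_i$ and $f(\mathbf{x})=f(\mathbf{x}')$, use the chain $\view_i^\pi(\mathbf{x}) \sim^p \Sim_i(\mathbf{x}_i,f(\mathbf{x})) = \Sim_i(\mathbf{x}'_i,f(\mathbf{x}')) \sim^p \view_i^\pi(\mathbf{x}')$: the middle step is literally the same distribution because the two argument tuples coincide, and $\sim^p$ is an equivalence relation, so NI-privacy follows. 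This direction needs no hypothesis on $f$.

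The interesting direction is \emph{NI-based $\Rightarrow$ simulation-based}, and this is where efficient invertibility of $f$ is used. I would construct $\Sim_i$ explicitly: on input $(x_i,y)$ it first runs the inversion algorithm to obtain a full input $\mathbf{x}^*$ with $\mathbf{x}^*_i = x_i$ and $f(\mathbf{x}^*) = y$, then internally executes $\pi$ honestly on $\mathbf{x}^*$ (which it can do, knowing $\pi$) and outputs the view of $P_i$ from that execution. For the arguments $(\mathbf{x}_i, f(\mathbf{x}))$ that arise in the definition, a preimage with $i$-th coordinate $\mathbf{x}_i$ exists---namely $\mathbf{x}$ itself---so the inversion step succeeds; moreover $\mathbf{x}^*$ and $\mathbf{x}$ agree in coordinate $i$ and both map to $f(\mathbf{x})$, so NI-privacy gives $\view_i^\pi(\mathbf{x}^*) \sim^p \view_i^\pi(\mathbf{x})$. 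Since $\Sim_i(\mathbf{x}_i,f(\mathbf{x}))$ is, by construction, distributed exactly as $\view_i^\pi(\mathbf{x}^*)$ (a mixture over the---possibly random---choices of $\mathbf{x}^*$ if the inverter is randomized, each summand being $\sim^p \view_i^\pi(\mathbf{x})$), we get $\{\Sim_i(\mathbf{x}_i,f(\mathbf{x}))\}_{\mathbf{x}} \sim^p \{\view_i^\pi(\mathbf{x})\}_{\mathbf{x}}$. The simulator is polynomial time because inversion is efficient and $\pi$ is efficient; for the unbounded variant efficiency of inversion is not even needed, as a preimage exists by hypothesis.

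I expect the main obstacle to be making precise the notion of ``efficiently invertible'' that the construction actually requires: not plain invertibility of $f$, but the ability, for each $i$, to efficiently recover from a pair $(x_i,y)$ in the image of $\mathbf{x}\mapsto(\mathbf{x}_i,f(\mathbf{x}))$ some consistent full input. A secondary point to handle carefully is justifying that ``$\Sim_i$ runs $\pi$ honestly on $\mathbf{x}^*$'' genuinely reproduces the distribution $\view_i^\pi(\mathbf{x}^*)$ in the passive model---where the corrupt party $P_i$ also follows the protocol---and that this is independent of the adversary's code, which is what lets us quantify over adversaries first and build one simulator afterwards.
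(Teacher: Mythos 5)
Your proposal matches the paper's proof essentially step for step: the same transitivity-of-$\sim^p$ argument via $\Sim_i(\mathbf{x}_i,f(\mathbf{x}))=\Sim_i(\mathbf{x}'_i,f(\mathbf{x}'))$ for the simulation-to-NI direction, and the same simulator (invert $f$ with the $i$-th coordinate fixed, then honestly run $\pi$ on the recovered preimage) for the NI-to-simulation direction. Your version is in fact slightly more careful than the paper's on the quantifier structure and on why a randomized inverter still yields a $\sim^p$-equivalent mixture, but the underlying argument is identical.
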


\begin{proof}
Both definitions consist of two parts, correctness and privacy. As correctness is defined the same in both definitions, we only consider the privacy part of the definitions.

Let $\pi$ have perfect passive security under the NI definition. Let $P_i$ be the corrupt party.
Then by definition, there exists inputs $\mathbf{x}, \mathbf{x}' \in X_1 \times \cdots \times X_n$ with $\mathbf{x}_i = \mathbf{x}'_i$ and $f(\mathbf{x}) = f(\mathbf{x}')$, such that the protocol execution on these inputs produce equally distributed views.
Hence, a simulator with input $\mathbf{x}_i$ and $y=f(\mathbf{x})$ can invert $f$ for fixed $\mathbf{x}_i$ to obtain inputs $x'_1 \til x'_{i-1},x'_{i+1} \til x'_n$ (with $x'_j$ possible different from $x_j$ for $j \neq i$) for the honest parties. Then the simulator can construct a view by simulating the protocol $\pi$ on inputs $\mathbf{x}'$ with $\mathbf{x}'_i = \mathbf{x}_i$.

For the other direction, let $\pi$ have perfect passive security under the simulation-based definition.
Let $P_i$ be the corrupt party and let $\mathbf{x}, \mathbf{x}' \in X_1 \times \cdots \times X_n$ such that $\mathbf{x}_i = \mathbf{x}'_i$ and $f(\mathbf{x}) = f(\mathbf{x}')$.
Then by definition, there exists a simulator $S_i$ such that
\begin{equation*}
    \left\lbrace S_i(\mathbf{x}_i, f(\mathbf{x})) \right\rbrace_{\mathbf{x}} 
        \sim^p
    \left\lbrace  \view_i^\pi(\mathbf{x}) \right\rbrace_{\mathbf{x}}
\end{equation*}
and
\begin{equation*}
    \left\lbrace S_i(\mathbf{x}'_i, f(\mathbf{x}')) \right\rbrace_{\mathbf{x}'} 
        \sim^p
    \left\lbrace  \view_i^\pi(\mathbf{x}') \right\rbrace_{\mathbf{x}'}.
\end{equation*}
Since $\mathbf{x}_i = \mathbf{x}'_i$ and $f(\mathbf{x}) = f(\mathbf{x}')$, we have
\begin{equation*}
    \left\lbrace S_i(\mathbf{x}_i, f(\mathbf{x})) \right\rbrace_{\mathbf{x}} 
        \sim^p
    \left\lbrace S_i(\mathbf{x}'_i, f(\mathbf{x}')) \right\rbrace_{\mathbf{x}'} 
\end{equation*}
and hence
\begin{equation*}
    \left\lbrace  \view_i^\pi(\mathbf{x}) \right\rbrace_{\mathbf{x}}
        \sim^p
    \left\lbrace  \view_i^\pi(\mathbf{x}') \right\rbrace_{\mathbf{x}'}
\end{equation*}
by transitivity of $\sim^p$.
\end{proof}

Note that the class of functions $f$ that can be considered in the previous theorem can be extended if one allows computationally unbounded simulation: Given a function $f$, $f(\mathbf{x})$, and $\mathbf{x}_i$, an unbounded simulator can find 
$x'_1 \til x'_{i-1},x'_{i+1} \til x'_n$ (with $x'_j$ possible different from $x_j$ for $j \neq i$)
 such that $f(\mathbf{x}) = f(\mathbf{x}')$.

The \EC\ formulation of NI-based security lemma for the passive case is
{\small
\begin{lstlisting}
equiv [ $\pi$ ~ $\pi$ :
  ={glob A, advid}
  /\ s{1}.[advid{1}] = s{2}.[advid{2}] 
  /\ sum s{1} = sum s{2} 
  ==>
  ={glob A} ]
\end{lstlisting}
}%
\noindent
where \lstinline|N| denotes the number of parties in the protocol.

\subsection{Active Security}\label{NI:active_sec}

In this section, we redefine the definition of active security as three properties that a protocol must follow, and show that these properties imply simulation-based active security.
This is the definition that will be used in Section~\ref{Maurer}.

A simulation-based proof of active security would proceed by considering an equivalence between $\pi$ and a simulator. However, non-interference properties are more amenable to computer-aided proofs, because the two runs of the same program are structurally the same. This is especially important in the active case:
Note that the simulator does not receive the adversary's secret protocol input as input, but must extract it (in our case from the communication between corrupt and honest parties). As a simulator must start to interact consistently with the adversary without knowledge of the inputs of any party or the protocol output,
the protocol and the simulator will have to differ more.

\EC \emph{does} support game hopping proofs using simulators though; see e.g. Barthe et al.~\cite{DBLP:conf/fosad/BartheDGKSS13}. However, our implementation of Maurer's multiplication protocol takes roughly 500LOC, so a game hopping proof would require many large intermediate protocols and hence much code duplication.

Let $\pi$ be an $n$-party protocol that computes $f$.
We assume for the rest of this work that $\pi$ can be split into three phases: input, computation, and output, where the output phase consists of a single round of communication. Let $\pi_1$ denote the combined input and computation phase, and let $\pi_2$ denote the output phase such that $\pi = \pi_2 \circ \pi_1$.
Let $\view_i^{\pi_1}(\mathbf{x})$ denote the view of party $P_i$ in the execution of $\pi_1$, and let $\outputfct^\pi(\mathbf{x}) = (y_1 \til y_n)$ denote the output of all parties after the execution of the protocol $\pi$.

\paragraph*{Input extraction}
We define an extraction function to be used in the security definition: Let $v =\view_i^{\pi_1}(\mathbf{x})$ be the view of party $P_i$ after the execution of $\pi_1$ (the input and computation phase). Then there exists a polynomial-time input extraction function 
$$ \mathbf{x}_i \from \inp_i (v) $$
that takes a view of $P_i$ after execution of $\pi_1$ and outputs party $P_i$'s committed input (i.e.~the input $P_i$ decided to use during $\pi_1$)

Note that the extraction function $\inp_i$ extracts the input that the adversary is committed to after the input phase. The adversary may start with an input and change it during the input phase. After that phase, however, the adversary is committed to an input. In Maurer's protocol, this is the case because of the shares it sent to the honest parties, i.e.~the shares that the honest parties received determine the adversary's input uniquely. 

\begin{definition}[Perfect Active NI-based Security] \label{def:NI_active_security}
Let $\pi = \pi_2 \comp \pi_1$ be a protocol that computes $f$.

Protocol $\pi$ is said to securely compute $f$ in the presence of static malicious adversaries if for all $a\in[n]$ and for every adversary $\Adv$ that actively corrupts party $P_a$, the protocol fulfills the following properties:
\begin{itemize}
    \item[]\textbf{Correctness:} 
    Let $\mathbf{x} \in X_1 \times \cdots \times X_n$ be the inputs to the execution and let $v =\view_a^{\pi_1}(\mathbf{x})$ be the view of corrupt party $P_a$ after the execution of $\pi_1$. Let $\outputfct^\pi(\mathbf{x}) = (y_1 \til y_n)$ be the output of the protocol $\pi$, then for all $i\in[n]$ with $i \neq a$ we have
    \begin{align*}
        \Prob{y_i = f(\mathbf{x}')} = 1 
    \end{align*}
    where $\mathbf{x}'_a = \inp_a(v)$ (the committed input for the corrupt party) and $\mathbf{x}'_j = \mathbf{x}_j$ for all $j \neq a$ (the honest parties inputs).

    \item[]\textbf{Input Independence:} 
    For all inputs $\mathbf{x}, \mathbf{x}' \in X_1 \times \cdots \times X_n$ with $\mathbf{x}_a = \mathbf{x}'_a$ (fixed input for the corrupt party),
    \begin{equation*}
        \left\lbrace \view_a^{\pi_1}(\mathbf{x}) \right\rbrace_{\mathbf{x}}
        \sim^p
        \left\lbrace \view_a^{\pi_1}(\mathbf{x}') \right\rbrace_{\mathbf{x}}.
    \end{equation*}
    
    \item[]\textbf{Output Simulation:} 
    Let $\mathbf{x} \in X_1 \times \cdots \times X_n$ be the inputs to the execution and let $v =\view_a^{\pi_1}(\mathbf{x})$ be the view of party $P_a$ after the execution of $\pi_1$. Then let $y = f(\mathbf{x}')$, where $\mathbf{x}'_a = \inp_a(v)$ and $\mathbf{x}'_i = \mathbf{x}_i$ for all $i \neq a$. We say that the output phase $\pi_2$ preserves privacy if the final messages $\{m_i\}_{i\neq a}$ sent by the honest parties only depend on the view $v$ and the result $y$, and moreover, they can be computed efficiently. I.e.~the final messages follow an efficiently samplable distribution on $v$ and $y$
    \begin{align*}
        \{m_i\}_{i\neq a} \from \mathcal{D}_{v,y}.
    \end{align*}    
\end{itemize}
\end{definition}

\subsubsection*{NI-based implies simulation-based security}
Next, we prove that the above NI-based definition implies the standard simulation-based definition. In fact, the NI-based definition is greatly inspired by a widely used strategy to construct simulators for the specific kind of protocol we have in mind.

The runtime of the simulator that we construct in the proof depends on
the runtime of input extraction and output simulation. Since both are possible in polynomial-time, then the simulator will run in polynomial-time as well.

\begin{theorem}\label{th:actequiv}
Let $\pi = \pi_2 \circ \pi_1$ be a $n$-party protocol computing $f$. If $\pi$ is perfect active NI-based secure, then there exists a simulator $\Sim$ such that $\pi$ is perfect active simulation-based secure.
\end{theorem}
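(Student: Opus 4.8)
The plan is to build the ideal-world simulator $\Sim$ explicitly from the pieces guaranteed by Definition~\ref{def:NI_active_security}, and then argue that the ideal execution it induces is perfectly indistinguishable from the real execution. Fix an adversary $\Adv$ that actively corrupts $P_a$. The simulator proceeds in three stages mirroring the protocol's three phases. First, $\Sim$ runs the input-and-computation phase $\pi_1$ internally with $\Adv$, playing the honest parties $P_j$ ($j \neq a$) on some \emph{canonical} fixed inputs $\bar{\mathbf x}_j$ (e.g.\ $0$); this yields a view $v = \view_a^{\pi_1}(\bar{\mathbf x})$ for the corrupt party. Second, $\Sim$ applies the extraction function to obtain the committed input $\mathbf x'_a \from \inp_a(v)$, sends $\mathbf x'_a$ (or $\bot$ if $\Adv$ aborted) to the trusted party $T$, and receives back $y = f(\mathbf x')$ where $\mathbf x'_j = \mathbf x_j$ are the \emph{real} honest inputs and $\mathbf x'_a$ is the extracted one. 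Third, $\Sim$ uses the Output Simulation property: it samples the honest parties' final messages $\{m_i\}_{i\neq a} \from \mathcal D_{v,y}$, delivers them to $\Adv$, and finally outputs whatever $\Adv$ outputs. The honest parties in the ideal world output $y$, which matches.

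The key step is then to show $\IDEAL_{f,\Sim,a}(\mathbf x) \sim^p \REAL_{\pi,\Adv,a}(\mathbf x)$ for all $\mathbf x$. I would split the analysis of the real execution into the same two phases $\pi = \pi_2 \circ \pi_1$. For the view of $\Adv$ during $\pi_1$: in the real world this is $\view_a^{\pi_1}(\mathbf x)$ with the real honest inputs, whereas $\Sim$ produced $\view_a^{\pi_1}(\bar{\mathbf x})$ with canonical inputs. These are identically distributed by the Input Independence property, since both input vectors agree in the $a$-th coordinate (the corrupt party's input is whatever $\Adv$ uses, which is the same in both runs as $\Adv$ sees the same distribution of messages — this requires a small coupling/induction argument on the rounds of $\pi_1$). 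In particular the extracted committed input $\inp_a(v)$ has the same distribution in both worlds, so the value $y$ fed to $\Adv$ in the output phase is distributed identically. For the output phase: in the real world the honest parties send their actual final messages $\{m_i\}_{i\neq a}$, and by Output Simulation these messages depend only on $(v,y)$ through the samplable distribution $\mathcal D_{v,y}$ — exactly what $\Sim$ sampled. Hence $\Adv$'s entire view, and therefore its output $g(\cdot) $, is identically distributed in both worlds. The honest parties' outputs agree with $f(\mathbf x')$ in both worlds by the Correctness property (ideal side by definition of $T$, real side by the correctness clause of Definition~\ref{def:NI_active_security}), so the full $n$-tuples match.

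The main obstacle I expect is making the coupling between the two $\pi_1$-executions fully rigorous: Input Independence as stated compares $\view_a^{\pi_1}(\mathbf x)$ for two \emph{fixed} honest-input vectors, but when an \emph{active} adversary is in the loop, its messages (and hence the resulting view) are a random variable of the whole interaction, and we must argue that replacing the real honest inputs by canonical ones does not change the joint distribution of (adversary's view, extracted input). The clean way is to note that $\Adv$ is a fixed (possibly randomized) function of its view transcript, so $\view_a^{\pi_1}$ under $\Adv$ is a deterministic post-processing of $\view_a^{\pi_1}$ under the "honest-but-message-following" experiment; applying perfect indistinguishability $\sim^p$ is preserved under such post-processing, and since $\inp_a$ and the subsequent output-phase simulation are themselves just (randomized) functions of $v$ and $y$, the equality propagates all the way to the final $n$-tuple. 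Once this compositional "$\sim^p$ is closed under randomized post-processing" lemma is in place, the rest is bookkeeping over the three phases.
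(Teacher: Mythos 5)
Your proposal is correct and follows essentially the same route as the paper: the same three-stage simulator (run $\pi_1$ against $\Adv$ with default honest inputs, extract the committed input and query $T$, then sample the final messages from $\mathcal{D}_{v,y}$), justified phase-by-phase via Input Independence, Correctness, and Output Simulation. Your closing observation about needing a ``$\sim^p$ is closed under randomized post-processing'' lemma to propagate equality of view distributions to the full output tuples is a genuine subtlety that the paper's proof leaves implicit, so flagging it is a point in your favor rather than a divergence.
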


\begin{proof}
We start from a real protocol execution and argue about a simulation strategy. In the simulation-based security definition, there exists a simulator that has to simulate messages sent by the honest parties without knowing their inputs. Moreover, this simulator has oracle access to a trusted party $T$ that knows the honest parties' inputs. 
Therefore, the general simulation strategy is to extract the adversary's committed input from the messages he sends during $\pi_1$. Then the simulator can query the trusted party for the correct output. Given this, the simulator can compute the final messages sent by the honest parties in the output phase.

We have to argue now that this strategy is feasible given a protocol with the properties stated above as well as that the strategy is indistinguishable from a real protocol execution to the adversary. We construct the following simulator:

\paragraph*{Simulator $\Sim$}
\begin{enumerate}
    \item Run $\pi_1$ with the adversary while simulating the honest parties with default inputs (e.g., $\mathbf{x}_i = 0$ for all $i \neq a$). 
    Let $v = \view_a^{\pi_1}(\mathbf{x})$ be the view of the corrupt party $P_a$ after the execution of $\pi_1$ (i.e.,~all communication between the corrupt party and the simulator).
       
    \item Extract the input $\mathbf{x}'_a$ that the adversary is committed to after the input phase as $\mathbf{x}'_a = \inp_a(v)$. Send $\mathbf{x}'_a$ to the trusted party $T$ to obtain the output $y$.
    
    \item Sample the messages that the honest parties send in the output phase as
        $\{m_i\}_{i\neq a} \from \mathcal{D}_{v,y}$. 
\end{enumerate}

We can now show that a protocol execution simulated with $\Sim$ is indistinguishable from a real protocol execution.
Input independence implies that the adversary's view after $\pi_1$ is independent of the honest parties' inputs. In particular, no adversary can distinguish between a view from executing $\pi_1$ with real inputs for the honest parties from one with inputs $0$ for all honest parties. 
Hence, the adversary can not distinguish between the real execution and the simulated execution of $\pi_1$. 
 
Then, the simulator gathers the view of the corrupt party $v = \view_a^{\pi_1}(\mathbf{x})$, and sends $\mathbf{x}'_a = \inp_a(v)$ to the trusted party $T$. The trusted party computes and returns $y= f(\mathbf{x}')$, where $\mathbf{x}'_i = \mathbf{x}_i$ for all $i \neq a$, i.e., the real inputs of the honest parties.
Thus, the simulator gets the correct output because of the correctness property of the protocol. 

Finally, output simulation guarantees that the final messages the honest parties send in the output phase only depend on the view $v$ of the corrupt party after $\pi_1$ and the correct output of the protocol $y$. Thus, the simulator can sample these messages $\{m_i\}_{i\neq a} $ according to the distribution $\mathcal{D}_{v,y}$. 
The adversary knows the final message $m_a$ that it is supposed to send in the output phase since $m_a$ is uniquely determined by its view in the protocol. Thus, given the messages $\{m_i\}_{i \in[n]}$, the adversary can compute the correct output $y$. 
Furthermore, the adversary learns no more than the output, since the messages sent by the simulator (on behalf of the honest parties) only depend on the output and the adversary's view after $\pi_1$ (which did not reveal any information about the honest parties inputs).
\end{proof}

\paragraph*{Remark}
The other direction, from simulation-based security to NI-based security, does not hold in general as the simulator may use a simulation strategy that is incompatible with the NI-definition.

%%% Local Variables:
%%% mode: latex
%%% TeX-master: "main"
%%% End:

\section{Modelling in \EC}\label{Maurer}
\subsection{\EC}
\noindent
EasyCrypt is a proof assistant for verifying the security of
cryptographic constructions in the computational model.
\EC provides
a simple imperative probabilistic programming language \pWhile to
specify protocols. 

As an example of \EC code, consider the additive sharing protocol consisting of share and reconstruct procedures.
{\small
\begin{lstlisting}
proc share_additive(s : zmod) : zmod list = 
{ var mxrd;
  mxrd <$\$$ dlist dzmod (N-1);
  return (s - sum mxrd) :: mxrd;}

proc reconstruct_additive (sx : zmod list) : zmod = 
{ return sum sx; }
\end{lstlisting}}

\noindent
Here
\lstinline|mxrd <$\$$ dlist dzmod (N-1)|
samples from a uniform distribution on lists over \texttt{zmod} of size $N-1$.

Proving is done using a variety of (probabilistic relational) Hoare logics.
Mathematical functions and data types are defined using an ambient
higher order logic and a functional programming language.
\EC has both tactic based interactive proofs, but also automatic
proofs, using an SMT backend.

Modelling and proving is done in two ways. When dealing with
honest parties, we tend to use functional programs, so-called
operators, and use the ambient logic to reason about these programs.
Adversarial code is treated using a module
system and procedure calls. One specifies the module type of the adversarial code, and
proves properties over all possible instances of this module type. 
The module system is connected to the imperative \pWhile\ language%
\footnote{The choice for an \emph{imperative} probabilistic language is not forced. One could also use a functional probabilistic programming language, such
as Rml, instead of pWhile. Rml used in the ALEA Coq-library~\cite{DBLP:journals/scp/AudebaudP09}, the base for
\CCrypt~\cite{DBLP:phd/hal/ZanellaBeguelin10}, the predecessor of \EC. However, such a functional language is not implemented in \EC.}%
, so we reason in
the corresponding Hoare logic.
Usually, the main effort is to find the correct pre- and post-conditions and loop invariants.
Often we are arguing that the adversary is harmless in certain parts
of the protocol. The way to specify this is via the equivalence of adversarial 
(imperative) code and functional code. This kind of reasoning is 
familiar in program correctness. 

\subsection{Modelling the Protocol}

In this section we discuss how to model Maurer's MPC protocol~\cite{Maurer2006370} and prove active security. 

\paragraph{Adversary and phases}
Recall that a malicious adversary can deviate arbitrarily from the protocol, e.g., by sending wrong or malformed messages or aborting the protocol. 
To model these arbitrary actions, we use the abstract module types of \EC to provide an interface to the adversary, while at the same allowing it to deviate from the protocol description. 
Thus, for each stage in the protocol, whenever we want the adversary to do some computation, send information, or receive information, we call the adversary's abstract procedures. E.g., in the output phase, we send the honest parties' shares of the result
\lstinline|psums| to the adversary and ask the adversary to send his share to all other parties:
\begin{center}
   \lstinline|advc <@ A.bxshareofres(psums);|
\end{center}
Note that according to the protocol description, the adversary is supposed to send its share of the result to all other parties (i.e., everyone should receive the same share). However, the adversary has the power to send different and (possibly) wrong shares to the other parties. We model this by letting the adversary return a matrix \lstinline|advc|, where row $i$ is the share that adversary sends to party $P_i$.

In this setting, we can present the general structure of the three phases of the addition protocol in \EC code. Here \lstinline|<@| denote a procedure call.

{\small
\begin{lstlisting}
proc input(s : zmod list) : zmod matrix list = {
  var shares;
  
  (* Distribution of shares: *)
  shares <@ do_sharing(s);
  pshares <- distribute_shares shares;
  
  (* - adversary receives shares from the honest parties *)
  A.recv_shares(pshares.[advid]);

  (* Consistency check: *)
  (* - collect the complaints *)
  rx <@ verify_shares();
  
  (* - adversary logs the requests since they are public *)
  A.recv_rx(rx);

  (* - reply the complaints *)
  bx <@ broadcast_shares(rx);
  
  (* - adversary logs the broadcast values *)
  A.recv_bx(bx);

  (* - fix the parties' views from the broadcast values *)
  pshares <- fix advid pshares bx;

  return pshares;}

proc computation (pshares : zmod matrix list) : zmod matrix = {
  (* - notify the adversary about the start of the phase *)
  A.localsum();
  (* - perform the local addition for the honest parties *)
  return mklocalsums advid pshares; }

proc output (psums : zmod matrix) : zmod list = {
  var advc, advres, resshares;
  
  (* - the adversary receives shares of the result
   *   from the honest parties and sends his own share *)
  advc <@ A.bxshareofres(psums);
  resshares <-distribute_resshares advid advc psums;
  
  (* - get the adversary's result.. *)
  advres <@ A.getres();
  
  (* ... and return results of all parties *)
  return reconstruct_vss advid advres resshares; }
\end{lstlisting}}

The input and output phase for the multiplication protocol are essentially the same.
The computation phase looks as follows. Importantly, the computation phase contains (a lot of) communication which is modelled by matrices keeping track of all messages that were sent and received.

{\small\begin{lstlisting}
  proc multiplication (a, b : zmod matrix) : zmod matrix = {
    var known_shared_terms, shared_terms_rep, sharedterms_rep_distr, opened_diff, sharedterms;

    (* distribution of shared terms $a_ib_j$ *)
    shared_terms_rep <@ mult_term_sharing(a,b);
    sharedterms_rep_distr <- distribute_shared_terms shared_terms_rep;

    (* compute and open pairwise differences of term sharings *)
    opened_diff <@ mult_check_term_sharing (sharedterms_rep_distr);

    (* choose sharing for each term *)
    sharedterms <@ mult_determine_term_sharing (a,b,sharedterms_rep_distr, opened_diff);

    (* add all terms: *)
    (* - rearrange for convenience *)
    known_shared_terms <- rearrange_sharedterms sharedterms;
    (* - inform adversary of local computation *)
    A.localmultsum(); 
    (* - add term sharings *)
    return mklocalsums_M ((N-1)*N) known_shared_terms;}
\end{lstlisting}}

\noindent Here \lstinline|advid| is the id of the party corrupted by the adversary.

\paragraph{Communication}
\EC\ has no native support for communication. However, the logic and the module system are rich enough to express this. We use lists and matrices to keep track of the messages that are sent. To model the communication with the adversary, we specify abstract procedures to both send and receive messages. The \EC logic keeps track of the global state of the adversary. In the example above, \lstinline|pshares| is the knowledge that each party has. The various calls to \lstinline|A| are used to send/receive information to/from the adversary, using \EC's stateful modules.

\paragraph{Notes on the implementation}
We note that this implementation has a few limitations regarding the power of the malicious adversary compared to the more general description by Maurer. 

We consider a setting where the adversary can only abort during the input phase. Since we consider a protocol that is secure in the presence of an honest majority, the information shared by the honest parties after a successful input phase will be enough to compute the result of the protocol. 
This means that if the adversary aborts during the output phase, then the honest parties will still be able to reconstruct the result of the computation. 

Another limitation is that in the input phase, the adversary is forced to send shares of its secret before it can see the shares of the honest parties secret. This might seem to limit the adversary's power of choosing its input based on this extra information. However, the adversary is only committed to an input at the end of the phase. Since the adversary's initial shares cannot be forced to be consistent, during the consistency check, the adversary can still change the shares of its input, which will allow to change its input to something else (that possibly depends on the information received so far).

\paragraph{Extraction}

As mentioned in Section \ref{MPC:active_sec} and \ref{NI:active_sec}, a proof of active security requires the extraction of information from the adversary's communication. In particular, we can extract the adversary's input and share of the output it is supposed to have from the messages that he sends and receives in the input phase.

The input extractor \lstinline|extract| collects all messages that the adversary sent to honest parties%
(in our case, all other parties)
during the input phase, and uses their shares as shares of the adversary's input. These shares can now be used to reconstruct the input. The verifiable secret sharing guarantees here that all honest parties received consistent shares that when combined reconstruct to the uniquely determined value that the adversary is committed to after the input phase.
{\small
\begin{lstlisting}
op extract (advid : int) (pview : zmod matrix3) = col N advid pview. 
\end{lstlisting}}

\paragraph{High level structure}

The whole protocol is then as follows. \lstinline|pi1| consists of the input and computation phase.
\lstinline|Protocol| consists of \lstinline|pi1| and the output phase. For the purpose of our proofs, the protocol also calls the extractor \lstinline|extract_advinpsx| and then stores an updated input list \lstinline|secrets|.
{\small
\begin{lstlisting}
proc pi1 (s : zmod list) : zmod matrix list * zmod matrix = {
  var rinp, rcomp;
    
  rinp  <@ input (s);
  rcomp <@ computation (rinp);
    
  return (rinp, rcomp); }

proc protocol(s : zmod list) : zmod list = {
  var inp, out;

  (inp, comp) <@ pi1 (s);

  (* input extraction *)
  advinpsx <- mkaddshares (extract advid inp);
  secrets <- s.[advid <- reconstruct advinpsx];

  out <@ output (comp);

  return out; }
\end{lstlisting}}

\subsection{Outline of the proof for active security}

In this section, we present the key steps in proving active security of Maurer's protocol. This is done by proving that the protocol fulfills the three properties in Definition~\ref{def:NI_active_security}: correctness, input independence, and output simulation. 
We will present each property in \EC code and provide an overview of the proof. %
We use $N$ in the implementation to denote the number of parties. Furthermore, parties are indexed from 0. The lemmas only hold as long as $N \geq 4$, since the reconstruction procedure of the verifiable secret sharing scheme takes the majority over $N-1$ values which is only well-defined if there are at least 3 values to compare.
\lstinline|advid| will denote the id of the adversary.
The final messages that the honest parties sent in the output phase are their shares of the result, and these messages are denoted by \lstinline|comp|. 

Note that \EC cannot check the runtime of code. However, it is easy to see that input extraction and output simulation run in polynomial-time in our case.

\subsubsection{Correctness}
In order to state correctness, we first need to define the inputs to the protocol. Honest parties will use the inputs they were assigned to in the beginning. As mentioned before, the adversary may change its mind about its input, but only during the input phase. Afterwards, it is committed to a unique input. Therefore, we extracted the shares \lstinline|advinpsx| of the input in \lstinline|protocol|, and then define correctness with respect to the updated input list \lstinline|secrets| they define. 
{\small
\begin{lstlisting}
lemma correctness sx : hoare [ protocol :
  sx = s /\ size s = N /\ 0 <= advid < N 
  ==>
  0 <= id < N /\ id <> advid =>
    res.[id] = f secrets ].
\end{lstlisting}}

For the input and output phase, correctness of the protocol steps can be reduced to correctness of the secret sharing scheme. For the computation phase, we prove that if the inputs to a gate are secret sharings of secrets $x_1, \dots, x_t$ ($t$ depends on the gate), then the output of the gate is a secret sharing of the gate function on inputs $x_1, \dots, x_t$. For addition, this property follows from linearity of the secret sharing scheme.
For multiplication, note that if $(a_0, \dots, a_{N-1})$ and $(b_0, \dots, b_{N-1})$ are the additive secret sharings corresponding to the replicated secret sharings of inputs $a$ and $b$, then $a\cdot b = \sum_{i,j} a_i b_j$. Since the secret sharing scheme is linear, it is sufficient to prove that the secret sharing of any term $a_i b_j$ that the parties agree on is actually a secret sharing of the value $a_i \cdot b_j$.
In the multiplication protocol, all honest parties that know $a_i$ and $b_j$ will output a secret sharing of the value $a_i \cdot b_j$, whereas the adversary (if involved, i.e.\ if it knows both $a_i$ and $b_j$) may output a secret sharing of an arbitrary value. The subsequent check computes the pairwise differences between those sharings. If all opened differences are 0, then the presence of at least one honest party guarantees that all secret sharings are sharings of the correct value $a_i \cdot b_j$. Otherwise, the replicated secret sharing corresponding to the additive sharing $(a_i b_j, 0, \dots, 0)$ is a secret sharing of $a_i \cdot b_j$.

\subsubsection{Input Independence} 
After the execution of $\pi_1$ (i.e. the input and computation phases), we show that the knowledge (global state) of the adversary is independent of the inputs of the honest parties, meaning that the adversary cannot distinguish between two executions of $\pi_1$ with different inputs for the honest parties. 

{\small
\begin{lstlisting}
lemma input_independence : equiv [ pi1 ~ pi1 :
     ={glob A, advid}
  /\ 0 <= advid{1} < N
  /\ s{1}.[advid{1}] = s{2}.[advid{2}]
     ==> ={glob A} ].
\end{lstlisting}}

For the input phase, we prove input independence, which follows from secrecy of the secret sharing scheme, and integrity of the output, i.e. each party knows the same additive shares (except for the one they are not supposed to know) of each shared input. Addition does not involve any communication and hence preserves input independence. For multiplication, we will consider again a term sharing $a_i b_j$. If this term sharing has input independence (with respect to the honest parties' inputs into the protocol), then the sum of all term sharings will have this property as well, and so will the output of a multiplication gate.
Having the parties output sharings of $a_i b_j$ preserves input independence because of secrecy of the secret sharing scheme. Furthermore, all honest parties that share a value for the current term will share the same value. When computing and opening the pairwise differences of the sharings, each difference will be 0 if and only if the sharings that were compared are sharings of the same value. In particular, if all sharings are sharings of the same value, then all opened differences will be 0 in both executions, and hence won't provide an adversary with any input-dependent knowledge. If not all differences are 0, then one of the parties (the adversary) must have shared an incorrect value, i.e.\ not $a_i b_j$. In this case, the adversary must have caused the opened difference to be different from 0 (and the adversary knows this difference in advance). Since the whole protocol so far was input independent, the opened differences follow the same distribution in both executions, and hence the check preserves input independence. In the final step, the parties agree on a sharing of $a_i b_j$. If all differences were 0, then the parties will use the secret sharings of one of the parties, which preserves input independence by secrecy of the secret sharing scheme. Otherwise, the parties report their values for $a_i$ and $b_j$ and compute a replicated secret sharing from them. Again, this case can only occur if the adversary knows both $a_i$ and $b_j$, and hence this step preserves input independence.

\subsubsection{Output Simulation}
The messages sent by the honest parties in the beginning of the output phase will only reveal the result of the computation. To prove this, we need to show that these messages follow some distribution on the result and the view of the adversary. This is proven by giving the exact function that maps the result and the adversary's view into the messages the honest parties sent. 

Here we note that these final messages are the honest parties' shares of the result, and the view of the adversary contains its share of the result. From the definition of replicated secret sharing, we notice that the adversary's share of the result contains $N-1$ of the $N$ additive shares that sum up to the result. Thus, we can easily reconstruct the missing value, and construct the honest parties replicated shares of the result.

{\small
\begin{lstlisting}
op finalmsg (advid : int) (y : zmod) (pviewadv : zmod list) : zmod matrix =
  mkseq (fun i =>
  mkseq (fun j =>
    if i = advid then witness
    else if i = j then witness
    else if j = advid then y 
         - sum (drop_elem advid pviewadv)
    else pviewadv.[j]) N) N.

lemma output_simulation : hoare [ protocol :
  size s = N /\ 0 <= advid < N 
  ==>  finalmsg advid y advresshares = comp ].
\end{lstlisting}}%
\subsection{The formalization}\label{whatwehavedone}
We have demonstrated how to formalize active security of MPC protocols in \EC. We have done this by an implementation of Maurer's MPC protocol in \EC. For the addition protocol, we have a complete proof for the three properties correctness, output simulation and input independence that are required by our non-interference based security definition. We have extended this to the multiplication protocol by identifying and formalizing all the invariants for both privacy and correctness of the multiplication protocol, and we have proved correctness, output simulation and input independence. We have permitted ourselves the license not to reprove some parts of the multiplication protocol that were very similar to the  addition protocol. The statements and proof structure are very similar. However, refactoring all our code to convince \EC\ of this fact does not seem to provide enough insight to merit this effort. All these places are clearly documented and marked with \lstinline|admit| in the sources.

Our formalization is substantial: as a very rough measure, it consists of approximately 5000LOC, 1800 of which are used for the addition protocol. The code is dense, as it combines the efficient ssreflect language with SMT-calls (for comparison, the easycrypt standard library consists of 18000LOC). 

More generally, we have given a methodology to attack complex simulation-based proofs of protocols involving much communication. We did this by translating simulation-based proofs to NI-arguments and using the \EC module system to model arbitrary adversarial code. Simulation-based proofs are a standard technique in cryptography, but they are difficult to make mathematically precise. A good example is the ongoing effort to make the simulation-based UC framework completely precise, e.g.\ discussed in~\cite{canetti2015simpler}. The framework is in general well understood, implementation in a proof assistant is still lacking; see~\cite{cryptoeprint:2018:306}.

%%% Local Variables:
%%% mode: latex
%%% TeX-master: "main"
%%% End:

\section{Related work}\label{related}

\EC is a specialized proof assistant for security proofs. To our knowledge, \EC is the only tool that currently allows us to conveniently verify MPC protocols in the manner that we did, since it combines a rich ambient logic with an embedded logic for a probabilistic programming language. However, \EC grew out of a Coq library~\cite{DBLP:phd/hal/ZanellaBeguelin10} and, in principle the techniques we present here could also be used in other proof assistants for higher order logic, or type theory, once one defines the programming logic in the ambient language; see e.g.~\cite{jung2017iris} for a framework that supports imperative, but non-probabilistic, program logics in Coq. The general purpose proof assistant Coq has been used to verify crypto protocols in the foundational cryptography framework~\cite{petcher2015foundational} and in the verification of  an OpenSSL implementation of HMAC~\cite{DBLP:conf/uss/BeringerPYA15}. A similar library could be built in F*, as suggested in~\cite{lowstar}. This would have the added benefit of a build-in SMT solver.

Wysteria is a domain
specific language for \MPC. It has been embedded~\cite{rastogi16wysstar} in the $F^*$ programming language/proof assistant. Various protocols have been verified in Wysteria$^*$ to be secure against a \emph{passive}
adversary. However, all the cryptographic primitives are treated axiomatically as $F^*$ does not include probabilistic computation. 
In this paper, we treat all the aspects, consider \emph{active} security and treat the multiplication protocol.

A simulation-based proof for two parties has been formalized in the Isabelle proof assistant~\cite{butler2017simulate}, based on CryptHOL~\cite{CryptHOL}. The logic of Isabelle is similar to \EC{}s ambient logic. However, Isabelle lacks built-in pHoare logics. They use a shallow embedding of a probabilistic programming language into Isabelle using a monadic interpretation. This is less powerful than the deep embedding used in Certicrypt~\cite{DBLP:phd/hal/ZanellaBeguelin10}, and implicitly in \EC. In Isabelle, they prove security against a passive (semi-honest) adversary of a two-party multiplication protocol using simulation based proofs. In contrast, we prove security of a much more complicated protocol that is secure against an active adversary and works for $n$-parties. This requires us to model the adversary abstractly using \EC{}s modules system, and we thus have the harder job to reason about imperative code. Importantly, we also provide new proof techniques that are more amenable to automation, as they are close to the proof techniques used for program logics.

CryptoVerif~\cite{BlanchetFnTPS16} is an automatic protocol verifier. It targets different goals than \EC\ and is complementary to it. Indeed, CryptoVerif aims to automate cryptographic game transformations. It applies a collection
of game transformations, using a full automatic proof strategy that can be driven by users' hints. On the other hand, \EC, relies on an embedding of Probabilistic Relational Hoare Logic that subsumes cryptographic games transformations.  Its logic being relatively complete, it can be used to prove various properties about a large class of cryptographic
primitives. Although we have not tried it, we are very doubtful that Maurer's proof can be reconstructed automatically.

\section{Conclusions and future work}\label{conclusion}
\subsection{Easycrypt} The \EC logic and module system were a good fit to express these protocols in a natural way, once we found the right way of modelling it.

In our experience, developing in \EC is fairly pleasant. The combination of interactive theorem proving using the ssreflect language combined with automatic theorem proving (SMT) is very powerful. Unfortunately, at the moment the SMT-solver does not provide information which lemmas were used unlike e.g.~\cite{kaliszyk2015hol}. This information could be used to speed up checking the document (which currently takes a couple of minutes), but also be used to prove similar lemmas. Many uses of SMT could be avoided by using a more expressive type system. In particular, good support for (coersive) subtyping would have been helpful. Since much of the low-level proving is automated, leaving out simple type information will often result in the SMT-solver failing without further information.
In this case, it would be helpful to provide a counterexample to the user. This functionality is provided by a number of SMT-solvers and also by quickcheck.
\EC's pWhile language does not support iterators (for-loops). Most of our constructions are iterations over the list of parties. Our first modelling consisted of growing lists by a while loop. We have found it is more convenient to start the loop with an array with default values and update during the while loop.

Finally, we spent much time on the whiteboard trying to connect the code for modelling communication with our visual representation of matrices. A simple evaluator for functional programs would have been useful.

\subsection{Future work}
We have proved the security against one corrupted party. The same methodology works for more corrupted parties, as we can just give more information to the adversary, using \EC's module system. We would make a predicate \lstinline|Honest| on the interval $[0..N)$. Our current formalization checks if the current party \lstinline|i| equals \lstinline|advid|. Instead we would check \lstinline|Honest i|. Currently only the information of one party is sent to the adversary (an \EC\ module). Instead, we would send the information of all the corrupted parties to the adversary. Like in the current protocol, all the corrupted parties would move last.

It would be interesting to code arithmetical circuits in \EC. This would be an effort similar to the encoding of  multivariate polynomials in Coq (\url{https://github.com/math-comp/multinomials}).

From a higher perspective, it would be very interesting to formally connect our work with an efficient implementation, as is done in high-assurance crypto~\cite{almeida2014verified}.

\subsection{Conclusion}
We have presented new security definitions for active security of MPC protocols, shown that they imply the standard ones and formalized the security proof in \EC. This is the first formalized protocol with a proof of active security and the first one for $n$-parties.

\section*{Acknowledgements} 
\noindent
Gilles Barthe showed us how non-interference can be used in the context of MPC for a passive adversary. Ivan Damg\aa{}rd helped us to understand MPC protocols and their security proofs. In the beginning of the project we profitted from discussions with Aslan Askarov, Michael Nielsen, and Mathias Pedersen. We are grateful to all of them.\\
Helene Haagh and Sabine Oechsner were supported by the European Research Council (ERC) under the European Unions's Horizon 2020 research and innovation programme under grant agreement No 669255 (MPCPRO), the Danish Independent Research Council under Grant-ID DFF-6108-00169 (FoCC), and the European Union's Horizon 2020 research and innovation programme under grant agreement No 731583 (SODA). Bas Spitters was supported by the Guarded Homotopy Type Theory project, funded by the Villum Foundation, project number 12386.

\bibliographystyle{plain}
\bibliography{references}

\end{document}